\newcommand{\BO}[1]{{ O}\left(#1\right)}
\newcommand{\BTO}[1]{\tilde{ O}\left(#1\right)}
\newcommand{\BT}[1]{{\Theta}\left(#1\right)}
\newcommand{\BOM}[1]{\Omega\left(#1\right)}
\newtheorem{lemma}{Lemma}
\newtheorem{theorem}{Theorem}
\newcommand{\E}[1]{\mathbb{E}\left[#1\right]}
\newcommand{\Var}[1]{\text{Var}\left(#1\right)}
\renewcommand{\Pr}[1]{\text{Pr}\left(#1\right)}
\title{Subgraph enumeration in massive graphs\thanks{Part of this work was done while the author was working at the University of Padova. It is supported in part by University of Padova project CPDA121378,   MIUR of Italy project AMANDA,  and by the European Research Council  grant 614331.
}}
\author{Francesco Silvestri}
\affil{IT University of Copenhagen, Denmark\\ \texttt{fras@itu.dk}
}
\date{}
\begin{document}
\pagestyle{plain}
\maketitle

\begin{abstract}
We consider the problem of enumerating all instances of a given pattern graph in
a large data graph. Our focus is on determining the input/output (I/O)
complexity of this problem. Let $E$ be the number of edges in the data graph,
$k=\BO{1}$ be the number of vertices in the pattern graph, $B$ be the block
length, and $M$ be the main memory size. The main results of the paper are two
algorithms that enumerate all instances of the pattern graph. The first one is a
deterministic algorithm that exploits a suitable independent set of the pattern
graph of size $1\leq s \leq k/2$ and requires
$\BO{E^{k-s}/\left(BM^{k-s-1}\right)}$ I/Os. The second algorithm is a
randomized algorithm that enumerates all instances in
$\BO{E^{k/2}/\left(BM^{k/2-1}\right)}$ expected I/Os; the same bound also applies with high probability under some assumptions. A lower bound shows that 
the deterministic algorithm is optimal for some  pattern
graphs with $s=k/2$ (e.g., paths and cycles of even length, meshes of even
side), while the  randomized algorithm is optimal for a wide class of  pattern graphs, called Alon class (e.g., cliques, cycles and every graph with a perfect matching).
\end{abstract}

\section{Introduction}

This paper targets the problem of enumerating all subgraphs of an input
\emph{data graph} that are isomorphic to a given \emph{pattern graph}. Subgraph
enumeration is a tool for analyzing the structural and functional properties of networks (see, e.g.,~\cite{KairamWL12,GregoriLM13}), and  typical pattern graphs are cliques (e.g.,
triangles), cycles and paths. Subgraph enumeration is also strictly related to
the evaluation of conjunctive queries or multiway joins on a single large
relation~\cite{AfratiDU13}. 

The aim of this paper is to assess the input/output
(I/O) complexity of the enumeration problem when the data graph does not fit in the main memory.   
The main results of the paper are  external memory (EM)  algorithms for subgraph enumeration. 
In particular, we provide a deterministic algorithm which exploits
 a \emph{matched independent set} (MIS) of the pattern graph $H$, which is an independent set $S$ such that each vertex in $S$ can be  matched with a vertex not in $S$.
Let $E$ be the number of edges in the input data graph,
$k=\BO{1}$ be the number of vertices in the pattern graph, $B$ be the block
length, and $M$ be the main memory size. Our results are the following:

\begin{enumerate}
\item We give a deterministic algorithm for subgraph
enumeration that exploits a MIS $S$ of the pattern graph of size $s=|S|$, with $1\leq s \leq k/2$.
Its I/O complexity is $\BO{( E^{k-s} \log_M E )/(B M^{k-s-1})}$. As
an example, let $M=\BOM{E^\epsilon}$ for some constant $\epsilon>0$:  
we get $\BO{E^{k-1}/(B M^{k-2})}$ I/Os if the pattern graph is a $k$-clique ($s=1$),
and $\BO{E^{k/2}/(B M^{k/2-1})}$ I/Os if the pattern graph  is an even length path or
cycle, or a mesh of even side ($s=k/2$).

\item We propose a randomized algorithm for subgraph
enumeration. It exploits the random coloring technique in~\cite{PaghS13} for
decomposing the problem into smaller subproblems that are solved with the above deterministic algorithm. Its expected I/O complexity is
$\BO{E^{k/2}/\left(B M^{k/2-1}\right)}$. 
We show that the claimed I/O
complexity is also achieved with high probability when $M=\BOM{\sqrt{E}\log E}$ by adjusting
  the coloring process. 
We remark that the deterministic algorithm is a crucial component of the randomized one, and cannot be replaced by state-of-the-art techniques without increasing the I/O complexity.

\item We discuss some related issues.
We first show  that the enumeration of $T$ instances of
a pattern graph in the Alon class~\cite{AfratiSSU13} requires, even in the best case,
$\BOM{{T}/\left({B M^{ k/2-1}}\right)+{T^{2/k}}/{B}}$ I/Os. 
The Alon class includes important graphs like
cliques, cycles and, more in general, every graph with a perfect matching.
This lower bound  implies that the randomized algorithm is optimal in the worst case since a clique with $\sqrt{E}$ vertices  contains $T=\BT{E^{k/2}}$ instances of any pattern graph. It also shows that  the deterministic algorithm is optimal for some sparse pattern graphs (e.g.,  even length paths and cycles, 
meshes of even side) if $M=\BOM{E^\epsilon}$ for some constant $\epsilon>0$.
Finally, we  analyze  the work complexity of our algorithms: 
for pattern graphs in the Alon class, the deterministic and randomized algorithms require respectively
$\BTO{E^{k-s}/M^{k/2-s}}$ and $\BTO{E^{k/2}}$ total work, where the last term is just a polylog factor from the optimal bound.
\end{enumerate}

The assumption $k=\BO{1}$ is quite natural since it covers the most relevant case; however, the analyses of our algorithms do not assume $k$ to be constant and clearly state the dependency of the I/O complexities on $k$. 
Moreover, this paper focuses on  the enumeration of edge-induced
subgraphs which are isomorphic to the
pattern graph; however, we claim that our algorithms can be extended even to the enumeration of
vertex-induced subgraphs (see Appendix~\ref{app:induced} for more details). 

We do not require our algorithms to \emph{list} all instances of the pattern graph,
that is to store all instances on the external memory. We simply consider
algorithms that \emph{enumerate} instances: that is, for each instance, they
call a function \texttt{emit}$(\cdot)$ with the instance as input
parameter. 
Nevertheless, our upper and
lower bounds can be easily adapted to list all instances  by increasing the I/O
complexity of an unavoidable additive  $\BT{T/B}$ factor, where $T$ is the
number of instances.

\section{Related work and comparison with our results}
To the best of our knowledge, this is the first paper to deal
with the I/O complexity of the enumeration of a generic pattern graph.
Previous works have targeted  the I/O complexity of triangle enumeration. An optimal algorithm requiring
$\BO{\text{sort}(E)}$ I/Os for graphs with constant arboricity is given in~\cite{GoodrichP11}; this algorithm
however does not efficiently scale with larger arboricity. 
The works ~\cite{ChuC12,HuTC13} propose algorithms for a generic data graph incurring $\BO{E^2/(BM)}$ I/Os. 
In the special case where the
pattern graph is a triangle, our deterministic algorithm recalls the one
proposed in~\cite{HuTC13}, but it does not need to manage in a different way vertices of the data graph with degree $\leq {M}$ and with degree $>M$. 
The previous bound is improved to
an optimal $\BT{E^{3/2}/\left(B\sqrt{M}\right)}$ (expected) I/O complexity in~\cite{PaghS13,HuTQ15}, which respectively provide  randomized and deterministic algorithms.
Our randomized algorithm extends to a generic pattern graph the random vertex coloring technique introduced in~\cite{PaghS13}. 
However, this paper substantially differs from~\cite{PaghS13} since novel and non-trivial results are proposed: besides specific
technicalities required for the generalization of the coloring technique, we give the new deterministic
algorithm based on a MIS, which
is crucial for solving small subproblems generated by the
coloring technique, and we show that the I/O complexity of the randomized
algorithm holds even with high probability. 

An algorithm for the enumeration of $k$-cliques, for a given $k\geq 3$, is given in~\cite{ChibaN85} for the RAM model, but it requires
$\BOM{E^{k/2}/B}$ I/Os in a memory hierarchy. Multiway-join is a problem from
database theory   related to subgraph enumeration: however, the most relevant
algorithms (e.g.,~\cite{NgoPRR12}) ignore the memory hierarchy and do not
efficiently translate into our settings (a generous analysis would give
$\BOM{E^{k/2}/B}$ I/Os).
Algorithms for detecting the existence of a given pattern graph and/or for
counting the number of its instances have also been widely studied (e.g.,~\cite{KolountzakisMPT12,KaneMSS12,WilliamsW13}). 
However, these works rely on techniques (e.g., sampling,  sketches, fast matrix multiplication) that  allow to detect/count instances without explicitly materializing them, and hence cannot be used for enumeration. 

Subgraph enumeration has also been targeted in MapReduce. 
An algorithm for clique enumeration is given in~\cite{FinocchiFF14}, but it does not translate into an I/O efficient algorithm since subproblem size cannot
be tuned to fit  internal memory (unless $M=E$).
Triangle and general pattern graph enumerations are target in \cite{AfratiSSU13,SuriV11} and in~\cite{AfratiDU13}, respectively. Although these results are based on partitioning techniques similar to the one used
by our randomized algorithm, they assume a random input and
provide weak bounds with an arbitrary input. Better worst case bounds are provided in~\cite{ParkSKP14} by exploiting  the random partitioning in~\cite{PaghS13}.

We remark that the I/O complexity of previous algorithms for general subgraph enumeration is $\BOM{E^{k/2}/B}$, which becomes a
performance bottleneck (i.e., it dominates the $\BO{E^{k/2}}$ work complexity) as soon as reading a memory block in external memory is $\BOM{B}$ times slower than a CPU operation. In contrast, our randomized algorithm requires a smaller amount of I/Os without increasing the work complexity, and  avoids the I/O performance bottleneck even for slower external memories (i.e., until an I/O requires $\BO{BM^{k/2-1}}$ CPU operations). 

\section{Preliminaries}
\subsection{Models} 
We study our algorithms in the \emph{external memory model}, which has been
widely adopted in the literature (see, e.g., the survey by
Vitter~\cite{Vitter08}). The model consists of an internal memory of $M$ words
and an external memory of unbounded size. The processor can only use data stored
in the internal memory and move data between the two memories in blocks of
consecutive $B$ words. We suppose each vertex and edge to require one memory
word. The \emph{I/O complexity} of an algorithm is defined as the number of
input/output blocks moved between the two memories by the algorithm. Our
algorithms are aware of the memory hierarchy parameters, and  can be
straightforwardly adapted to a memory-cache hierarchy with an automatic replacement policy (e.g., LRU).
 
\subsection{Notation}
We denote with $G=(V, E)$ the simple and undirected input data graph. For
notational convenience, whenever the context is clear we use $E$ as a shorthand for the size of  set $E$ (and similarly for
other sets). We denote with $\deg(v)$ the degree of a vertex $v\in V$. We 
assume that the sizes of $V$ and $E$ are known, that all vertices in $V$ are
labeled with an unique identifier, and that the edge set $E$ is represented with
adjacency lists which are stored in consecutive memory positions 
and sorted by identifier. We observe that these assumptions can be guaranteed by
suitably sorting and scanning the input edges without asymptotically affecting
the I/O complexity of our algorithms.
\begin{wrapfigure}{r}{0.25\textwidth}
\centering \scalebox{.5}{\input{mis.tex}}
  \caption{The MIS for a $3\times 3$ mesh. Grey nodes denote the MIS; dashed lines are the probe edges. Since $k=4$, the probe of $h_5$ is $h_3$, which is not in the MIS.\label{fig:pattern}}
\end{wrapfigure}

We denote with $H=(V_H, E_H)$ the simple and undirected pattern graph
that we are looking for in the input graph $G$. Let $k=\vert V_H\vert$ and 
$V_H=\{h_1, \ldots, h_ k\}$. An \textit{instance} of $H$ in $G$ is a tuple
($v_1,\ldots, v_k$) of $ k$ distinct vertices of $G$ such that $(v_i, v_j)\in E$
for each edge $(h_i,h_j)\in E_H$. An instance is \textit{induced} if $(v_i,
v_j)\in E$ if and only if $(h_i,h_j) \in E_H$. Namely, {instances} are
edge-induced subgraphs of $G$, while {induced instances} are vertex-induced
subgraphs of $G$. For a given instance we say that vertex $h_i$ (resp., edge
$(h_i,h_j)$) is \textit{mapped} onto $v_i$ (resp., $(v_i,v_j)$). An instance is
enumerated by calling a function \texttt{emit}$(v_1,\ldots, v_k)$, and each call 
performs no I/Os and requires $\BO{1}$ operations.

We define a \emph{matched independent set $S$} (MIS)  of the pattern graph $H$  to be an independent set of $H$ for which exists in $E_H$ a matching between the $s$
vertices in $S$ and $s$ vertices in $V_H \setminus S$, with $s=|S|$. We
have $1\leq s \leq k/2$. The maximum size of a MIS is $s=\lfloor k/2\rfloor$ for a
cycle of length $k$ or a mesh of size $\sqrt{k}\times \sqrt{k}$, while it
is $s=1$ for a $k$-clique. 
For a given MIS $S$, we let $h_{
k-s+1},\ldots, h_{ k}$ denote the vertices of $H$ in $S$ and assume that
$h_i\in V_H\setminus S$ is matched with $h_{k-s+i}\in S$ for every $1\leq i
\leq s$.
Finally,  we define the
\emph{probe vertex} of vertex $h_i$, with $1\leq i \leq k-s$ as follows:
it is $h_{k-s+i}$ if $1\leq i \leq
s$ (i.e., a vertex in $S$ is the probe vertex of its companion in the matching);
otherwise it is an arbitrary neighbor vertex in $S$ if $s+1 \leq i\leq k-s$. 
If $h_j$ is the probe vertex of $h_i$, then we say that the \emph{probe index
of $i$}, denoted with $P(i)$, is $j$ and that the \emph{probe edge} of $h_i$ is
$(h_i, h_j)$ (see example in Figure~\ref{fig:pattern}).
Since we are interested in  pattern graphs with a very small number of nodes, we suppose that an exhaustive search on the pattern graph is used to find a MIS with the largest size;  we leave as open problem to derive an efficient algorithm for extracting a large MIS.

\section{Deterministic EM Algorithm}\label{sec:det}

In this section we describe the deterministic algorithm for enumerating all
instances of the pattern graph $H$ by exploiting a MIS $S$
of $H$. The algorithm works for any $S$, however the best performance are
reached when $S$ is the maximum MIS.
For the sake of simplicity, we  assume that $s<k/2$, and hence
that there exists at least one vertex in $V_H\setminus S$, say $h_{k-s}$, not 
matched with a vertex in $S$. The case $s=k/2$, covered in Appendix~\ref{app:detalgspecialcase}, is based on the same approach but requires some minor
technicalities that increase the I/O complexity by a multiplicative factor
$\BO{\log_M E}$. This factor is asymptotically negligible as soon as 
$M=\BOM{ E^\epsilon}$ for some constant $\epsilon>0$.
We first provide a simple high level explanation of the algorithm, and then give a more detailed description. 

We observe that an instance of $H$ in $G$ is uniquely defined by the mapping of the $k-s-1$ probe edges  associated with $h_1, \ldots, h_{k-s-1}$ and of  vertex $h_{k-s}$, since such a 
mapping automatically fixes the mapping of all vertexes of $H$. 
As an example consider again Figure~\ref{fig:pattern}: any instance of the pattern graph is univocally given by the mapping of the probe edges $(h_1,h_6),(h_2,h_7),(h_3,h_8),(h_4,h_9)$ and of vertex $h_5$.
The opposite direction is not true: a mapping may not denote an instance of $H$ since a non-probe edge of $H$ may be mapped on an edge not in $G$.
The deterministic algorithm exploits these facts: it generate all mappings of $k-s-1$ probe edges and of  vertex $h_{k-s}$, and then verifies which mappings  denote real instances of $H$ in $G$.
The generation of all mappings is done with an I/O-efficient  exhaustive search.

We assume that the edges of $G$ are split into $\phi=\BT{Ek/ M}$ chunks. Specifically, the adjacency lists of $G$ are split into $\phi$ consecutive chunks $C_i$ of size in the range $(M/(8k),M/(4 k)]$, where 
$1\leq i \leq  \phi$ and $\phi\in [4 k E/M, 8kE/M)$. A vertex whose
adjacency list is completely contained in a chunk is called \textit{complete},
and  \textit{incomplete} otherwise. We require each chunk to contain at most
one incomplete vertex. 
It can be proved that such a partition exists and can be constructed by scanning the edge set $E$.

The algorithm works in $\phi^{k-s-1}$ \emph{rounds}, which run over all possible ways of selecting (with repetitions) $k-s-1$ chunks from $\phi$ chunks. In each round, the following operations are done (step numbers refer to the pseudocode in the next page):  the $k-s-1$ selected chunks are loaded into internal memory (steps 1-2);  by scanning  the entire edge list of $G$, all edges connecting two incomplete vertexes of the loaded chunks are inserted in memory, if not already in a chunk (step 3); finally, all instances of $H$ where the $i$-th probe edge is mapped on an edge in the $i$-th chunk are enumerated (step 4).
This last operation proceeds in \emph{iterations} that run over all possible ways of mapping $h_{k-s}$ to a vertex
$v\in V$ (note that this vertex is not fixed by the mapping of probe edges).
In each iteration (steps 4.a-4.c), the algorithm scans the adjacency list of  $v$ and checks if there exists an instance where $h_{k-s}$ is mapped on $v$ and the $i$-th probe edge is mapped on an edge in the $i$-th chunk;  function \texttt{emit}$(\cdot)$ is called for each existing instance.

We now provide a more detailed description of the deterministic algorithm. 
Consider a generic round and denote with  $C_{\ell_1},\ldots, C_{\ell_{ k-s-1}}$, for suitable
values of $\ell_1,\ldots, \ell_{ k-s-1}$,  the $k-s-1$ selected chunks. The algorithm uses the support sets $E'$, $E''$, $E_i$ for each $1\leq i
\leq k-s-1$,  and $V_i$ for each $1\leq i \leq k$, which we suppose to be stored in
internal memory and initially empty. Each round performs the following
operations:
\begin{enumerate}
\item \label{s1} For each $1\leq i \leq k-s-1$, we load in memory 
$C_{\ell_i}$, and fill $V_i$ and $E_i$ with the vertexes and edges that are contained in  $C_{\ell_i}$. Specifically, we  add to  $V_i$ all vertexes whose adjacency list is (partially) contained in $C_{\ell_i}$, and add to $E_i$ all edges $(u,v)$ where $u\in V_i$ and $(u,v)$ appears in the (part of) adjacency list of $u$  in $C_{\ell_i}$.

\item \label{s2} For each $1\leq i\leq s$, we add to $V_{k-s+i}$ all vertexes of $G$ on which $h_{k-s+i}\in S$ can be mapped assuming
that the probe edge of $h_i$ is mapped onto an edge in $E_i$. Formally, each vertex $u\in V$ is added to $V_{k-s+i}$ if and only if there exists a vertex $v\in V_i$ such that $(v,u)\in E_i$. No I/Os are needed in this step since the
operation can be performed by reading the chunks in internal
memory.\footnote{Note that at this point all sets $V_i$, with $i \neq k-s$, are
not empty because $s<k/2$. Indeed, when $s=k/2$,  $V_k$ is not
filled since $h_k$ is the probe vertex of $h_{k/2}$.}

\item \label{s3} Edge set $E'$ is filled with all edges of $G$ connecting
vertices in $(\cup_{i=1}^{k-s-1} V_i) \cup (\cup_{i=k-s+1}^k V_i)$ that are
not already available in internal memory but are required for correctly
enumerating instances. Formally, for each $(h_i,h_j)\in E_H$ with $1\leq
i,j\leq k$
and $i,j\neq k-s$, each edge $(v,v')\in E$ is added to $E'$ if and only if $v\in
V_{i}$,  $v'\in V_{j}$, but $(v,v')\notin E_{i}\cup E_{j}$. (We note that an
edge can be added to $E'$ although it is contained in $E_l$ for some $l\neq
i,j$.) This operation can be performed by scanning once the adjacency lists of
$G$. 

\item \label{s4} Enumerate all instances of $H$ in $G$ where vertex
$h_i$ is mapped onto a vertex in $V_i$ and its probe edge onto edges in $E_i$,
for any $0\leq i <k-s$. The enumeration proceeds in $V$ iterations. In an
iteration, we set $V_{k-s}=\{v\}$, for any possible value of $v\in V$, and then
the following operations are done:

\begin{enumerate}

\item \label{s4a} Let $E''$ be the edge set containing all edges between $v$ and
vertices in $(\cup_{i=1}^{k-s-1} V_i) \cup (\cup_{i=k-s+1}^k V_i)$ which are not
already in internal memory. Formally,  each edge $(v,v')$ is added to $E''$ if
and only if $v'\in V_i$ but $(v,v')\notin E_i$. (We note that an edge can be
added to $E''$ although it is contained in $E_l$ for some $l\neq i$.) This step
requires a scan of the adjacency list of $v$.

\item \label{s4b} Using a naive approach (see Section~\ref{sec:ext}), enumerate
in main memory all instances of $H$ in the subgraph $(\cup_{i=1}^{k} V_i, E'\cup
E'' \cup (\cup_{i=1}^{k-s-1} E_i))$ of $G$ where vertex $v_{k-s}$ is mapped onto
$v$, and the probe edge of $h_i$ is mapped onto an
edge in $E_i$ for each $1\leq i \leq k-s-1$.

\item \label{s4c} Empty sets $V_{k-s}$ and $E''$. 

\end{enumerate}

\item \label{s5} Empty sets $E'$, $V_{i}$ for each $1\leq i \leq k$, and $E_{i}$
for each $1\leq i \leq k-s-1$.

\end{enumerate}

Correctness and I/O complexity are stated in the following theorem:

\begin{theorem}\label{th:det}
The above algorithm correctly enumerates all instances of a given pattern graph
$H$ and its I/O complexity is $\BO{(8k)^{k-s-1}\frac{E^{k-s}}{B M^{k-s-1}}}.$
\end{theorem}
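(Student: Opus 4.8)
The plan is to prove the two assertions — completeness/soundness of the enumeration and the I/O bound — separately, both resting on a single structural fact about the chunk decomposition, which I expect to be the crux. Throughout I work with the described algorithm, i.e.\ the case $s<k/2$ (the case $s=k/2$ being deferred to the appendix as stated).

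For correctness I would fix an arbitrary instance $(v_1,\dots,v_k)$ and exhibit a round and iteration that emit it. For each probe index $i$ with $1\le i\le k-s-1$, the entry $v_{P(i)}$ appears exactly once in the adjacency list of $v_i$ (the graph is simple), hence lies in exactly one chunk $C_{\ell_i}$; I select the round $(C_{\ell_1},\dots,C_{\ell_{k-s-1}})$ and the iteration with $v=v_{k-s}$. Two checks remain. \emph{(i) Every $v_i$ lies in $V_i$:} for $1\le i\le k-s-1$ this holds because $v_i$'s list meets $C_{\ell_i}$ (Step~\ref{s1}); for an MIS image $v_{k-s+m}$ it holds because its matching edge $(v_m,v_{k-s+m})$ is the probe edge of $h_m$, so $(v_m,v_{k-s+m})\in E_m$ and Step~\ref{s2} inserts $v_{k-s+m}$ into $V_{k-s+m}$; and $v_{k-s}=v$ by choice. \emph{(ii) Every $H$-edge image $(v_i,v_j)$ is present in the in-memory graph $E'\cup E''\cup(\cup_i E_i)$:} if $i,j\neq k-s$ this is Step~\ref{s3} (the edge is already in some $E_l$, or else it is added to $E'$), and if one endpoint equals $v$ it is Step~\ref{s4a}. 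Given (i) and (ii), the in-memory search of Step~\ref{s4b} finds the instance, establishing completeness. Soundness is immediate: $E'$, $E''$ and the $E_i$ contain only genuine edges of $G$, so every emitted tuple is a real instance. (Uniqueness of the emission then follows if the in-memory search respects the orientation $h_i\mapsto V_i$ of each probe edge; in any case multiple emissions would not affect the I/O count, since \texttt{emit} is free.)

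The key obstacle — the step on which the whole memory budget rests — is bounding the auxiliary edge set $E'$. The naive estimate is hopeless: each $V_i$ has $\Theta(M/k)$ vertices, so in a dense $G$ there could be $\Theta(M^2/k^2)$ edges between $V_i$ and $V_j$, far exceeding $M$. The fix I would exploit is that a \emph{complete} vertex of a chunk has its entire adjacency list inside that chunk, so all its incident edges already sit in the corresponding $E_l$ and are never added to $E'$. Since Step~\ref{s3} inserts $(v,v')$ only when $(v,v')\notin E_i\cup E_j$, both endpoints must be \emph{incomplete}, and each chunk holds at most one incomplete vertex; hence each pattern edge $(h_i,h_j)$ contributes at most one edge, giving $|E'|\le |E_H|=\BO{k^2}$. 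A looser accounting bounds the rest — $|E_i|,|V_i|\le M/(4k)$, the MIS sets $V_{k-s+m}$ by $\sum_m|E_m|\le M/8$, and $E''$ by $M/8$ — so the total footprint is $3M/4+\BO{k^2}\le M$, which is precisely why the chunk sizes are tuned to $(M/(8k),M/(4k)]$. With everything resident in memory, Steps~\ref{s2}, \ref{s4b}, \ref{s4c} perform no I/O.

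Finally, for the I/O bound I charge each round separately. Step~\ref{s1} loads $\BO{M}$ words in $\BO{M/B+k}=\BO{E/B}$ I/Os; Step~\ref{s3} is one scan of the adjacency lists, $\BO{E/B}$; and across its $|V|$ iterations Step~\ref{s4a} reads each adjacency list exactly once, which (assuming $V=\BO{E}$, e.g.\ after discarding isolated vertices, legitimate since $h_{k-s}$ has a neighbour) is a single $\BO{E/B}$ scan. Thus a round costs $\BO{E/B}$ I/Os, and since $\phi<8kE/M$ the total over all $\phi^{k-s-1}$ rounds is
\[
\phi^{k-s-1}\cdot\BO{E/B}\;<\;(8k)^{k-s-1}\Big(\frac{E}{M}\Big)^{k-s-1}\cdot\BO{\frac{E}{B}}\;=\;\BO{(8k)^{k-s-1}\frac{E^{k-s}}{B\,M^{k-s-1}}},
\]
as claimed. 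I expect the only delicate bookkeeping when writing this out in full to be the amortization argument that Step~\ref{s4a} really collapses to one scan per round, and the routine justification that $V=\BO{E}$ may be assumed without loss of generality.
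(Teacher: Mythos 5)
Your proposal is correct and follows essentially the same route as the paper's own proof: the same identification of the unique round (via the chunk containing each probe-edge occurrence in $v_i$'s list) and unique iteration ($V_{k-s}=\{v_{k-s}\}$), the same key observation that an edge enters $E'$ only when its endpoints are incomplete (at most one per chunk) so the memory footprint stays below $M$, and the same charging of $\BO{E/B}$ I/Os per round (one scan in Step~\ref{s3} plus one amortized scan over the $V$ iterations of Step~\ref{s4a}) across $\phi^{k-s-1}\leq(8kE/M)^{k-s-1}$ rounds. The only differences are cosmetic: you bound $E''$ loosely by a constant fraction of $M$ where the paper bounds it by $k-1$ words via the same incompleteness argument, and your uniqueness remark is slightly more tentative than the paper's explicit argument, but both rest on the same fact that Step~\ref{s4b} constrains the probe edge of $h_i$ to lie in $E_i$.
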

\begin{proof} \emph{(Sketch)}
In order to prove the correctness of the algorithm, it is necessary to prove that all instances are emitted once. 
As already mentioned all instances are uniquely defined by the mapping of the probe edges  of $h_1,\ldots, h_{k-s-1}$ and of the  vertex $h_{k-s}$. 
Standard combinatorial arguments show that each one of these mappings is generated once during the execution of the algorithm. 
The scan of $E$ performed at the beginning of each round and the scanning of the adjacency list of vertex $v$ at the beginning of an iteration, guarantee that all edges necessary for verifying that a mapping gives a correct instance of $H$ in $G$  are available in the internal memory. 
The amount of internal memory  used in each round is at most $M$ since there are $k-s-1$ chunks of size at most $M/(4k)$ and at most $\BO{k^2}$ edges are added in steps 3 and 4.a. 
The naive enumeration in step 4.b then does not require any I/O. 
The I/O complexity of each round is therefore dominated by the two scans of the adjacency lists of $E$ (in step 3, and in the $V$ iterations of step 4.a). Since there are $\phi^{k-s-1} \leq (8k E/M)^{k-s-1}$ rounds,  the claim follows.
\qed
\end{proof}
\begin{proof}
We first prove the correctness of the algorithm. 
Consider an instance $(v_1,\ldots, v_ k)$ of the pattern graph $H$ in $G$. For
each $1\leq i\leq k-s-1$, let $C_{\ell_i}$ be the chunk containing $(v_i,
v_{P(i)})$ with $v_i\in V_i$ (we recall that $P(i)$ is the probe index of $i$,
that is $h_{P(i)}$ is the probe vertex of $h_i$). Consider the unique round
where chunks $C_{\ell_1},\ldots, C_{\ell_{ k-s-1}}$ are loaded in memory in this
order. Then, $(v_1,\ldots, v_ k)$ is correctly enumerated in the iteration where
$V_{ k-s}$ is set to $v_{ k-s}$. Indeed, all vertices and edges are available in
internal memory: Step~\ref{s1} guarantees that $v_i\in V_i$ for $1\leq i \leq
k-s-1$; Step~\ref{s2} adds $v_{k-s+i}$ to $V_{k-s+i}$ for $1\leq i \leq s$ since
the edge $(v_i, v_{k-s+i}) \in E_i$ by assumption and $P(i)={k-s+i}$ (we note
that this would not happen for $V_{k}$ when $s=k/2$ since $V_{k/2}$ is empty at
this point); Step~\ref{s3} we have that all edges connecting vertices in $\{v_1,
\ldots, v_{k-s-1}, v_{k-s+1},\ldots, v_k\}$ are in memory (more specifically,
all edges between complete vertices are already in internal memory after
Step~\ref{s1}); finally, Step~\ref{s4a} guarantees that all edges between
$v_{k-s}$ and $\{v_1, \ldots, v_{k-s-1}, v_{k-s+1}\ldots v_k\}$ are in memory.
The instance  $(v_1,\ldots, v_ k)$  is enumerated once: indeed, the instance can
be enumerated only in the unique round where chunks $C_{\ell_1},\ldots,
C_{\ell_{ k-s-1}}$ are loaded in memory in this order (a different order may
enumerate an automorphism but not the same instance), and in the unique
iteration where $V_{k-s}$ is set to $v_{k-s}$ (clearly, the naive approach for
enumeration in Step~\ref{s4b} must emit  each instance once).

We now show that the total amount of required internal memory is at most $M$.
The sets $V_i$ and $E_i$, for each $i\neq k-s$, have sizes at most $M/4k$ each,
and thus at most $M(k-1)/(2k)$ memory words are required (note that chunks
$C_{\ell_1},\ldots, C_{\ell_{ k-s-1}}$ can be removed from the internal memory
after Step~\ref{s1}). The size of $V_{k-s}$ is clearly one memory word. The size
of $E'$ is at most $(k-1)^2$ words: indeed, an edge $(v,v')\in E$ is added to
$E'$ if and only if $v\in V_{i}$,  $v'\in V_{j}$, and $(v,v')\notin E_{i}\cup
E_{j}$; this implies that $v$ and $v'$ are incomplete vertices, otherwise
$(v,v')$ would be in $E_{i}\cup E_{j}$; then, being at most one incomplete
vertex per chunk, the claim follows. Similarly, we have that $E''$ has size
at most $(k-1)$ words. Then, the total amount of space is $M(k-1)/(2k)+k^2$
which is not larger than $M$ since $k<<M$.

Finally, we  analyze the I/O complexity of the algorithm. The I/O cost for
enumerating instances in Step~\ref{s4b} is negligible since the problem fits in
memory and all operations are  performed in main memory. Then the I/O complexity
of each round is  asymptotically upper bounded by a constant number of scans of
the whole edge set $E$. Since there are $\phi^{k-s-1}\leq (8  k E/ M)^{k-s-1}$
rounds, the claimed I/O complexity follows.
\qed
\end{proof}

\section{Randomized EM Algorithm}\label{sec:rand}

We  are now ready to introduce the randomized algorithm. 
The algorithm, by making use of the random coloring technique in~\cite{PaghS13},
decomposes the problem  into small subproblems of expected size $\BO{M}$, which
are then solved with the previous deterministic algorithm.  We
assume that the maximum degree of $G$ is $\sqrt{EM}$; however, in Section~\ref{sec:degree}, we show how this assumption can be removed by increasing the I/O complexity by a multiplicative factor $k^{\BO{k}}$.
We first prove the expected I/O complexity and then show how to get the high probability under some assumptions in Section~\ref{sec:whp}.

Let $\xi: V \rightarrow \{1, \ldots, c\}$, with $c=\sqrt{E/M}$, be a vertex
coloring chosen uniformly at random from a family of $2(k-s+1)$-wise independent
family of functions.
The coloring $\xi$ partitions the edge set $E$ into $c^2$ sets of expected size
$M$. 
For each pair of colors $\tau_1,\tau_2\in \{1, \ldots, c\}$ and $\tau_1\leq
\tau_2$, we denote with $E_{\tau_1,\tau_2}$ the set containing edges colored
with $\tau_1$ and $\tau_2$, that is $E_{\tau_1,\tau_2}=\{(u,v)\in E \vert
\min\{\xi(u),\xi(v)\}=\tau_1, \max\{\xi(u),\xi(v)\}=\tau_2\}$.
Each instance $(v_1,\ldots, v_k)$ of the pattern graph can be colored by $\xi$
in
$c^k$ ways, and it is said to be \emph{$(\tau_1,\ldots,\tau_{k})$-colored} if
$\xi(v_i)=\tau_i$ for each $1\leq i\leq k$. 

The randomized algorithm enumerates all instances by decomposing the problem
into $c^k$ subproblems. Each subproblem finds all
$(\tau_1,\ldots,\tau_{k})$-colored instances according to a given $k$-tuple of
colors using the previous deterministic algorithm on the edge set
$\cup_{\tau_i\leq \tau_j} E_{\tau_i,\tau_j}$.
The algorithm is organized as follows:
\begin{enumerate}
\item Randomly select a coloring $\xi$ from a $2(k-s+1)$-wise independent 
family
of functions.
\item Using sorting, store edges in  $E_{\tau_1,\tau_2}$ in consecutive
positions, for each color pair $(\tau_1,\tau_2)$.
\item For each $k$-tuple of colors $(\tau_1,\ldots,\tau_{k})$, enumerate all
$(\tau_1,\ldots,\tau_{k})$-colored instances using the algorithm in
Section~\ref{sec:det} on the sets $E_{\tau_i,\tau_j}$, for each $\tau_i\leq
\tau_j$.
\end{enumerate}

In order to bound the I/O complexity of the randomized algorithm, we introduce the following technical lemma that upper bounds the expected number
$X_t$ of possible tuples of $t$ edges in $E$ that are colored in the same way by $\xi$. 
A closed form of this quantity is $X_t=\sum_{\tau_1\leq
\tau_2, E_{\tau_1,\tau_2}\geq t} \frac{E_{\tau_1,\tau_2}!}{(E_{\tau_1,
\tau_2}-t)!}$ 
(note that sets $E_{\tau_1,\tau_2}$ with less than $t$ edges do not contribute).

\begin{lemma}\label{lem:fact}
Let $\xi:V\rightarrow \{1,\ldots, c\}$ be chosen uniformly at random  from a
$2t$-wise independent family of hash functions, where $c=\sqrt{E/M}$.  If
$M=\BOM{t^2}$ and the maximum vertex degree in $G$ is $\sqrt{EM}$, then
$\E{X_t}\leq (2t)^{t-1}EM^{t-1}$.
\end{lemma}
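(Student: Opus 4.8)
The plan is to expand $\E{X_t}$ into a sum over edge-tuples and bound the probability that a tuple lands in one bucket by a quantity depending only on the graph the tuple spans. Since $X_t$ counts ordered tuples of $t$ \emph{distinct} edges lying in a common set $E_{\tau_1,\tau_2}$, linearity of expectation gives $\E{X_t}=\sum_{(f_1,\dots,f_t)}\Pr{f_1,\dots,f_t \text{ share a bucket}}$, the sum ranging over ordered tuples of distinct edges. Fixing such a tuple, let $w$ be the number of distinct endpoints it uses and $p$ the number of connected components of the graph it spans; since each tuple touches at most $2t$ vertices we have $w\le 2t$, so the $2t$-wise independence of $\xi$ lets me treat the colors of these $w$ vertices as fully independent and uniform on $\{1,\dots,c\}$.

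First I would bound the single-tuple probability. The edges share a bucket either when all $w$ endpoints receive one common color (monochromatic case) or when the two bucket-colors are distinct (bichromatic case). In the bichromatic case the spanned graph must be properly $2$-colored by the two bucket-colors, which is possible in exactly $2^p$ ways once the unordered color pair is fixed, so summing over the $\binom{c}{2}$ pairs and the $c$ monochromatic choices yields $\Pr{\text{share a bucket}}\le \binom{c}{2}2^{p}c^{-w}+c\cdot c^{-w}\le 2^{p-1}c^{2-w}+c^{1-w}$. Because $c=\sqrt{E/M}\ge 1$, the monochromatic term is smaller than the bichromatic one by a factor $\BO{1/c}=\BO{\sqrt{M/E}}$; I therefore expect it to contribute only a lower-order fraction of the final bound and would carry it along merely to confirm it is negligible.

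The core is to sum the bichromatic bound $2^{p-1}c^{2-w}$ over all tuples, which I would do by \emph{growing} the tuple one edge at a time and charging each new edge a factor equal to (number of admissible edges) times the incurred change $2^{\Delta p}c^{-\Delta w}$. The first edge costs $E$ (a single edge always lies in some bucket). For each later edge I split on how many endpoints are new: attaching to one of the $\le 2t$ already-present vertices and reaching a fresh vertex offers at most $2t\cdot\sqrt{EM}$ choices, each surviving with probability $c^{-1}$, and since $\sqrt{EM}\cdot c^{-1}=\sqrt{EM}\cdot\sqrt{M/E}=M$ this tree-growth step contributes at most $2tM$; an edge with two fresh endpoints contributes at most $E\cdot 2c^{-2}=2M$ while opening a new component; a chord between two present vertices contributes at most $\binom{2t}{2}\le 2t^2$. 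Under the hypothesis $M=\BOM{t^2}$ both $2M$ and $2t^2$ are $\BO{1/t}$ times the tree-growth factor $2tM$, so the product over the $t-1$ growth steps is dominated by $(2tM)^{t-1}$; combined with the leading $E$ this gives $\E{X_t}\le (2t)^{t-1}EM^{t-1}$. The same estimate can be packaged as an induction on $t$ via the recurrence $\E{X_t}\le 2tM\,\E{X_{t-1}}$ with base $\E{X_1}=E$.

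I expect the delicate point to be precisely the constant bookkeeping in this last step: the crude per-edge sum is $2tM+2M+2t^2$ rather than $2tM$, so obtaining the clean coefficient $(2t)^{t-1}$ — rather than $(\BO{t})^{t-1}$ — requires genuinely exploiting $M=\BOM{t^2}$ together with the maximum-degree bound to show that the new-component and chord additions are lower-order, i.e.\ that tuples spanning a \emph{tree} dominate the sum. A secondary technical point is justifying the full mutual independence of the $w\le 2t$ vertex colors, which is exactly what the $2t$-wise independence of $\xi$ supplies.
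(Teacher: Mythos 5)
Your proposal is correct and takes essentially the same route as the paper: the paper also proves the lemma by induction on $t$, classifying the newly added edge by whether it shares zero, one, or two endpoints with the earlier edges (probability factors $1/c^2$, $1/c$, $1$ and extension counts $E$, $2(t-1)\sqrt{EM}$, $(t-1)(2t-3)$, the middle one using the degree bound $\sqrt{EM}$), which yields exactly your recurrence $\E{X_t}\leq 2tM\,\E{X_{t-1}}$ with base $\E{X_1}=E$, the hypothesis $M=\BOM{t^2}$ absorbing the chord term just as you anticipate. Your static per-tuple bound via components and proper $2$-colorings is a mild elaboration the paper skips, and the constant-bookkeeping subtlety you flag is resolved in the paper the same way you suggest, by letting $M=\BOM{t^2}$ dominate the lower-order additions.
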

\begin{proof}
\newcommand{\e}{\mathbf{e}}
We prove the claim by induction on $t$. The claim is verified for $t=1$ since
$\E{X_1}=E$.
For each tuple $\e=(e_1,\ldots, e_t)$ of $t$ distinct edges in $E$ and for each
$2\leq i\leq t$, let $Y^\e_i=1$ if $e_i$ is in the same set $E_{\tau_1,
\tau_2}$, for some colors $\tau_1, \tau_2$, of edges $e_{1},\ldots, e_{i-1}$,
and 0 otherwise. Set $Y^\e_1=1$. We get  $X_t=\sum_{\e} Y^{\e}_t$. 
Since there are at most $2t$ vertices and $\xi$ is $2t$-wise, we get
$$
\Pr{Y^\e_t=1}\leq \left\{
\begin{array}{ll}
\Pr{Y^\e_{t-1}=1} /c^2 & \text{if $e_t$ is not adjacent to $e_1,\ldots,
e_{t-1}$}\\
\Pr{Y^\e_{t-1}=1}/c & \text{if $e_t$ is  adjacent to $e_1,\ldots, e_{t-1}$ on
one vertex}\\
\Pr{Y^\e_{t-1}=1} & \text{if $e_t$ is adjacent to $e_1,\ldots, e_{t-1}$ on two
vertices}
\end{array}
\right.
$$
Each $(t-1)$-tuple $\e'$ can be extended by at most $E$ edges that are not
connected with $\e'$, or by $2(t-1)\sqrt{EM}$ edges that are connected to $\e'$
on just one vertex (recall that the maximum degree of a vertex is $\sqrt{EM})$,
or by $(t-1)(2t-3)$ edges  that are connected to $\e'$ on two vertices.
Therefore, we get 
\begin{align*}
\E{X_t}&=\sum_{\e} \Pr{Y^\e_t=1}
\leq \E{X_{t-1}}\left(\frac{E}{c^2} + 2(t-1)\frac{\sqrt{EM}}{c}+
(t-1)(2t-3)\right).
\end{align*}
Since the right term is upper bounded by $2tM \E{X_{t-1}}$, the lemma
follows.\qed
\end{proof}

We are now ready to show the correctness and  I/O complexity  of the randomized algorithm.

\begin{theorem}\label{thm:randalg}
The above randomized algorithm  enumerates all instances of a given pattern
graph
$H$. 
If the maximum vertex degree of $G$ is $\sqrt{EM}$, then the expected I/O
complexity of the algorithm is 
$
\BO{(8k)^{4(k-s+1)}{  E^{ k/2}}/({B M^{k/2-1}})}.
$
\end{theorem}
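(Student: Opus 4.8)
I would prove correctness first and then bound the expected I/O cost by reducing it to a single combinatorial sum controlled by Lemma~\ref{lem:fact}.

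\textbf{Correctness.} Once $\xi$ is fixed, every instance $(v_1,\ldots,v_k)$ receives a single color tuple $(\xi(v_1),\ldots,\xi(v_k))$. Whenever $(v_i,v_j)$ is mapped onto a pattern edge $(h_i,h_j)$, the edge $(v_i,v_j)$ lies in $E_{\min\{\xi(v_i),\xi(v_j)\},\max\{\xi(v_i),\xi(v_j)\}}$, so all edges of the instance are contained in the edge set handed to the subproblem indexed by $(\xi(v_1),\ldots,\xi(v_k))$. By Theorem~\ref{th:det} the instance is therefore produced in that subproblem. To avoid duplicates I let the deterministic routine emit $(v_1,\ldots,v_k)$ only when $\xi(v_i)=\tau_i$ for all $i$ (an $\BO{k}$ test, no extra I/O); since each instance matches exactly one tuple and is emitted once inside the matching subproblem, it is emitted exactly once overall.

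\textbf{Reduction of the I/O cost.} Steps~1--2 cost $\BO{\sort{E}}$, which I will argue is dominated. Let $m_\tau$ be the number of edges given to the subproblem of color tuple $\tau$, i.e. the size of the union of the classes $E_{\tau_i,\tau_j}$ taken over the pattern edges $(h_i,h_j)$. By Theorem~\ref{th:det} that subproblem costs $\BO{(8k)^{k-s-1} m_\tau^{k-s}/(B M^{k-s-1})}$ I/Os (for $s=k/2$ an extra $\BO{\log_M E}$ factor appears, negligible under $M=\BOM{E^\epsilon}$), so by linearity of expectation the expected I/O complexity is
\[
\BO{\sort{E}}+\frac{(8k)^{k-s-1}}{B M^{k-s-1}}\,\BO{\E{\sum_\tau m_\tau^{k-s}}}.
\]
Everything thus reduces to proving $\E{\sum_\tau m_\tau^{k-s}}=\BO{k^{\BO{k}}\, E^{k/2} M^{k/2-s}}$; substituting this collapses the exponent of $M$ to $k/2-1$ and yields the claimed bound, the $k$-factors being absorbed into $(8k)^{4(k-s+1)}$, while a routine comparison using $E=\BOM{M}$ shows $\sort{E}$ is dominated for $k\ge 3$.

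\textbf{Bounding the sum.} Writing $m_\tau=\sum_{p}|E_p|$ over the $\BO{k^2}$ pattern-edge color pairs $p$ used by $\tau$, the power-mean (Jensen) inequality gives $m_\tau^{k-s}\le \BO{k^{2(k-s-1)}}\sum_p|E_p|^{k-s}$. Exchanging the order of summation, a fixed pair $p$ is used by $\tau$ only if some pattern edge is assigned the two colors of $p$; choosing that pattern edge fixes two coordinates of $\tau$ and leaves $k-2$ free, so $p$ occurs in at most $\BO{k^2 c^{k-2}}$ tuples, \emph{uniformly} for diagonal and off-diagonal pairs. Hence $\sum_\tau m_\tau^{k-s}\le \BO{k^{\BO{k}}}\,c^{k-2}\sum_{a\le b}|E_{a,b}|^{k-s}$. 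To connect the power sum to $X_{k-s}$: since $n^{t}\le 2\,n!/(n-t)!$ whenever $n\ge t^2$, the classes with $|E_{a,b}|\ge (k-s)^2$ contribute at most $2X_{k-s}$, while the at most $c^2=E/M$ smaller classes contribute at most $(E/M)(k-s)^{2(k-s)}$; with $M=\BOM{(k-s)^2}$ (implied by $k\ll M$) the first term dominates, so by Lemma~\ref{lem:fact} with $t=k-s$ (the $2(k-s+1)$-wise independence of $\xi$ implies the required $2(k-s)$-wise independence) one gets $\E{\sum_{a\le b}|E_{a,b}|^{k-s}}=\BO{(2k)^{k-s} E M^{k-s-1}}$. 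Substituting $c^{k-2}=(E/M)^{(k-2)/2}$ then gives $\E{\sum_\tau m_\tau^{k-s}}=\BO{k^{\BO{k}} E^{k/2} M^{k/2-s}}$, completing the reduction.

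\textbf{Main obstacle.} The delicate step is the last one. First, the interchange of the sum over tuples with the sum over color pairs must keep the per-pair multiplier at $c^{k-2}$ rather than $c^{k-1}$; this is exactly why each subproblem should receive only the $\BO{k^2}$ pattern-edge classes (so that a monochromatic class $E_{a,a}$ is needed only when two \emph{adjacent} pattern vertices share color $a$) instead of all classes of the color set of $\tau$, since otherwise the diagonal terms would inflate the bound by a factor $c=\sqrt{E/M}$. Second, one must bridge the power sum $\sum|E_{a,b}|^{k-s}$ and the falling-factorial sum $X_{k-s}$ that Lemma~\ref{lem:fact} actually controls, isolating the $\BO{E/M}$ under-populated classes. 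The remaining work is bookkeeping: tracking the $k$-dependent constants through the deterministic bound, the Jensen step, the pair count, and Lemma~\ref{lem:fact}, and checking that they stay below $(8k)^{4(k-s+1)}$.
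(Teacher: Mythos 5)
Your proposal follows the paper's own route almost step for step: linearity of expectation over the $c^k$ subproblems, a power-mean step reducing $m_\tau^t$ to per-class powers, the exchange of summation giving multiplicity $c^{k-2}$ per color class (your observation that only the $\BO{k^2}$ pattern-edge classes may be charged, so that diagonal classes also fix two coordinates, is correct and is implicit in the paper's sum over index pairs $\tau_i\leq\tau_j$), the power-to-falling-factorial bridge, and finally Lemma~\ref{lem:fact}; on the under-populated classes and the constants you are in fact more explicit than the paper. The one genuine gap is at the very start: you invoke Theorem~\ref{th:det} with exponent $k-s$, i.e.\ cost $\BO{(8k)^{k-s-1}m_\tau^{k-s}/(BM^{k-s-1})}$ per subproblem, and dismiss the $\log_M$ factor that this bound carries when $s=k/2$ (Appendix~\ref{app:detalgspecialcase}) by assuming $M=\BOM{E^\epsilon}$. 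Theorem~\ref{thm:randalg} carries no such hypothesis, and $s=k/2$ is exactly the case in which the algorithm would use the largest MIS (even paths, even cycles, meshes of even side), so as written your proof establishes the claimed bound only for $s<k/2$, or for $s=k/2$ under an extra assumption on $M$ --- otherwise it leaves a spurious logarithmic factor.

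The missing idea is the paper's deliberate weakening of the deterministic bound: since $\log_M m=\BO{m/M}$, the subproblem cost (log factor included, for every $s$) is bounded by $\BO{(8k)^{k-s}\,m_\tau^{k-s+1}/(BM^{k-s})}$, i.e.\ one extra factor $m_\tau/M$ absorbs the logarithm unconditionally. The paper then runs exactly your chain of inequalities with $t=k-s+1$ instead of $t=k-s$, ending with $\E{X_{k-s+1}}\leq(2(k-s+1))^{k-s}EM^{k-s}$ from Lemma~\ref{lem:fact}, which after multiplying by $c^{k-2}/(BM^{k-s})$ gives the stated $\BO{(8k)^{4(k-s+1)}E^{k/2}/(BM^{k/2-1})}$ with no assumption on $M$ beyond those of the lemma. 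This also explains the design choice you flagged as slack: $\xi$ is drawn from a $2(k-s+1)$-wise independent family precisely so that Lemma~\ref{lem:fact} can be applied at $t=k-s+1$, not merely at the $t=k-s$ your version needs. The repair to your proof is local --- replace $m_\tau^{k-s}/(BM^{k-s-1})$ by $m_\tau^{k-s+1}/(BM^{k-s})$ and bump $t$ by one throughout --- and the rest of your argument, including the treatment of the sorting cost in Steps~1--2 (which the paper silently drops as well), goes through unchanged.
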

\begin{proof}
 The correctness easily follows since each instance is colored with a suitable
color tuple $(\tau_1,\ldots,\tau_k)$ and is enumerated only in the subproblem
associated with this color tuple.
The cost of each subproblem is given by Theorem~\ref{th:det}, however for
simplicity, we upper bound the cost of the deterministic
algorithm with $\BO{(8k)^{k-s} E^{k-s+1}/(BM^{k-s})}$ in order
to get rid of the logarithmic term. The I/O complexity
$Q(E)$ of the algorithm is upper bounded by the sum of the costs of all $c^k$
subproblems. Then,
\begin{align*}
Q(E) =& \BO{\frac{ (8k)^{ k-s}}{B M^{k-s}}\sum_{(\tau_1,\ldots,
\tau_{k})}{\left(\sum_{\tau_i\leq \tau_j} E_{\tau_i,\tau_j}\right)^{k-s+1}}}\\
\leq& \BO{\frac{(8k)^{2( k-s+1)}}{B M^{ k-s}}\sum_{(\tau_1,\ldots,
\tau_{k})}\sum_{\tau_i\leq \tau_j}E_{\tau_i,\tau_j}^{ k-s+1}}
\\
\leq & 
\BO{\frac{ c^{k-2} (8k)^{2(k-s+1)}}{B M^{k-s}}\sum_{\tau_1\leq
\tau_2}E_{\tau_i,\tau_j}^{ k-s+1}}\\
\leq &
\BO{\frac{ c^{k-2} (8k)^{3(k-s+1)}}{B M^{k-s}}\sum_{\tau_1\leq \tau_2,
E_{\tau_1,\tau_2}\geq k-s+1}\frac{E_{\tau_1,\tau_2}!}{(E_{\tau_1,\tau_2}-
k+s-1)!}}\\
\leq & 
\BO{\frac{ c^{k-2} (8k)^{3(k-s+1)}}{B M^{k-s}}X_{k-s+1}}.
\end{align*}
By the linearity of expectation, we get
$
\E{Q(E)}=\BO{\frac{ c^{k-2} (8k)^{3(k-s+1)}}{B M^{k-s}}\E{X_{k-s+1}}}.
$
Then, by Lemma~\ref{lem:fact} and  the $2(k-s+1)$-wiseness of $\xi$, we get the
claimed result.
\qed

\end{proof}

We remark that our deterministic algorithm   is crucial for getting the claimed
I/O complexity. Indeed, the algorithm used in the subproblems should require
$\BO{M/B}$ I/Os for solving subproblems of size $\BT{M}$ (note that subproblems
may not perfectly fit the memory size). 
Using  existing enumeration algorithms, which require $\BOM{M^{k/2}/B}$ I/Os for
solving subproblems of size $\BT{M}$, would increase the total I/O complexity by
a multiplicative factor $\BOM{M^{k/2-1}}$.

\subsection{Getting the high probability}\label{sec:whp} 
If $M=\BOM{\sqrt{E}\log E}$, the randomized coloring process  can be slightly modified to get with probability  $1-1/\BT{E}$ the  claimed I/O complexity. For the sake of simplicity we assume the maximum degree to be $\sqrt{EM}$, although it is possible to remove this assumption even for higher degree by adapting the procedure described in the next Section~\ref{sec:degree}.\footnote{For $k=\BO{1}$, the procedure in Section~\ref{sec:degree} consists in  repeating the randomized algorithm a constant amount of times. Then by an union bound, we get that the claimed complexity.}

A vertex $v\in V$ has \emph{high degree} if $ \sqrt{E}\leq \deg(v) \leq
\sqrt{EM}$ and has \emph{low degree} if $ \deg(v) < \sqrt{E}$. The coloring
process is modified as follows. The colors of low degree vertices are assigned
independently and uniformly at random. 
The colors of  high degree vertices are set by partitioning  vertices into $c$
groups so that the sum of degrees within each group is in  $[\sqrt{EM},
2\sqrt{EM})$, and then high degree vertices within the $i$-th group get color
$i$  (this operation requires  $\BO{1}$ sorts).  

Our argument relies on the technique by Janson~\cite[Theorem 2.3]{Janson04} for
obtaining a strong deviation bound for sums of dependent random variables, which
we recall here for completeness. Let  $X = \sum_{i=1}^p Y_i$ where each $Y_i$ is
a random variable with  $Y_i - \E{Y_i}\leq 1$, and let $\psi=\sum_{i=i}^p
\Var{Y_i}$. 
Denote with $\Delta$  the maximum degree of the dependency graph of $Y_1,
\ldots, Y_p$: this is a graph with vertex set $Y=\{1, \ldots, p\}$ such that if
$B \subset Y$ and  $i\in Y$ is not connected to a vertex in $B$, then $Y_i$ is
independent of $\{Y_j\}_{j \in B}$. Then, for any $d >0$, we have
$
\Pr{X \geq (1+d) \E{X}}\leq e^{-\frac{8 d^2 \E{X}^2}{25 \Delta (\psi+d \E{X}
/3)}}$. 

\begin{theorem}\label{th:whp}
Let $M=\BOM{\sqrt{E} \log E}$ and let the maximum vertex degree of $G$ be
$\sqrt{EM}$. Then, the I/O complexity of the
above algorithm is 
$\BO{(8k)^{6(k-s)}{E^{k/2}}/({B M^{k/2-1}})}$ with  probability at least $1-1/E$.
\end{theorem}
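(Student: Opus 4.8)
The plan is to obtain the high-probability bound by controlling the sizes of the colour-pair edge sets $E_{\tau_1,\tau_2}$, rather than by concentrating the high-moment quantity $X_{k-s+1}$ directly (which has too large a dependency degree to concentrate well). Starting from the cost decomposition inside the proof of Theorem~\ref{thm:randalg}, the I/O complexity satisfies
\[
Q(E) = \BO{\frac{c^{k-2}(8k)^{2(k-s+1)}}{BM^{k-s}}\sum_{\tau_1\le\tau_2}E_{\tau_1,\tau_2}^{k-s+1}},
\]
so it suffices to show that, with probability at least $1-1/E$, every colour pair obeys $E_{\tau_1,\tau_2}=\BO{M}$. On that event, using $\sum_{\tau_1\le\tau_2}E_{\tau_1,\tau_2}=E$ (each edge lies in exactly one set) and pulling out $\left(\max_{\tau_1,\tau_2}E_{\tau_1,\tau_2}\right)^{k-s}=\BO{M^{k-s}}$ gives $\sum_{\tau_1\le\tau_2}E_{\tau_1,\tau_2}^{k-s+1}=\BO{M^{k-s}E}$; substituting $c=\sqrt{E/M}$ then collapses the bound to $\BO{(8k)^{6(k-s)}E^{k/2}/(BM^{k/2-1})}$, as claimed. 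The $\BO{\sort{E}}$ cost of the modified colouring, which uses only $\BO{1}$ sorts, is dominated by this term.

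I would split each $E_{\tau_1,\tau_2}$ by endpoint type and treat the deterministic part separately. Edges with both endpoints of high degree are placed deterministically by the balanced partition. Since each group has total degree below $2\sqrt{EM}$ while every high-degree vertex has degree at least $\sqrt{E}$, each group contains at most $2\sqrt{M}$ high-degree vertices; hence the number of high--high edges between any two groups is at most $(2\sqrt{M})^2=\BO{M}$. This is the crucial structural observation: bounding high--high edges by the \emph{square of the group size} rather than by the group's degree sum $\sqrt{EM}$ is what keeps the deterministic contribution within the $\BO{M}$ budget.

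It remains to bound the random part of $E_{\tau_1,\tau_2}$, namely the low--low and high--low edges, for which I would invoke the stated bound of Janson. Writing this part as $Z=\sum_e Y_e$ with $Y_e$ the indicator that edge $e$ falls in $E_{\tau_1,\tau_2}$, I would verify the three parameters: $\E{Z}=\BO{M}$ (a low--low edge hits a fixed pair with probability $\BO{1/c^2}$ and a high--low edge with probability $1/c$, and each group degree sum is $\BO{\sqrt{EM}}$); $\psi=\sum_e\Var{Y_e}\le\E{Z}$ since the $Y_e$ are indicators; and the dependency degree $\Delta\le 2\sqrt{E}$, because two random edges are dependent only if they share a low-degree vertex, and each low-degree vertex, having degree below $\sqrt{E}$, lies in fewer than $\sqrt{E}$ edges (the fixed high-degree colours create no dependence). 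Applying the bound with threshold $(1+d)\E{Z}=\gamma M$ for a suitable constant $\gamma$, the exponent is $\BOM{\gamma M/\Delta}=\BOM{M/\sqrt{E}}$, which is $\BOM{\log E}$ precisely because $M=\BOM{\sqrt{E}\log E}$; taking the hidden constant large enough forces $\Pr{Z\ge \gamma M}\le 1/E^2$. A union bound over the $\BO{c^2}=\BO{E/M}$ colour pairs then gives $E_{\tau_1,\tau_2}=\BO{M}$ simultaneously for all pairs with probability at least $1-1/E$.

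The main obstacle I anticipate is the concentration step, specifically the interplay between the degree split and the memory regime. The split is forced on us twice: it makes the deterministic high--high bound $\BO{M}$ (few high vertices per group), and it makes $\Delta=\BO{\sqrt{E}}$ instead of $\BO{\sqrt{EM}}$ (dependence propagates only through low-degree vertices); only with $\Delta=\BO{\sqrt{E}}$ does the Janson exponent $\Theta(M/\sqrt{E})$ reach $\BOM{\log E}$ under $M=\BOM{\sqrt{E}\log E}$. Some care is also needed for colour pairs whose mean $\E{Z}$ is far below $M$: there one applies Janson with a large deviation $d$, and I would check that the exponent still grows like $\gamma M/\Delta$, so the same tail bound and union bound carry through.
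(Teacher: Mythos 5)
Your proposal is correct and follows essentially the same route as the paper's own proof: the same low/high degree split with the balanced deterministic grouping giving at most $2\sqrt{M}$ high-degree vertices per colour group (hence $\BO{M}$ high--high edges), the same application of Janson's inequality with dependency degree $\Delta=\BO{2\sqrt{E}}$ through low-degree vertices only, and the same union bound over the $c^2$ colour pairs. The only cosmetic differences are that you merge the low--low and high--low edges into a single Janson application where the paper handles them separately, and you aggregate the final cost via the sum $\sum_{\tau_1\leq\tau_2}E_{\tau_1,\tau_2}^{k-s+1}$ rather than per-subproblem counting.
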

\begin{proof} Let $E^L$ be the set of edges in $E$ connecting two low degree vertices. We
also define $E^H=E/E^L$,  $E_{\tau_1, \tau_2}^L = E_{\tau_1, \tau_2} \cap E^L$,
$E_{\tau_1, \tau_2}^H = E_{\tau_1, \tau_2} \cap E^H$.
We first show that the size of $E_{\tau_1, \tau_2}^L$ for any color pair 
$\tau_1, \tau_2$ is smaller than $2M$ with probability at least $1-1/(2E)$.
Assume for simplicity that $\vert E^L\vert =\vert E\vert $.
For each edge $e\in E^L$, define the random variable   $Y_e$ to be $1$ if edge
$e$ is in $E_{\tau_1, \tau_2}^L$, and $0$ otherwise. 
We thus have $E_{\tau_1, \tau_2}^L = \sum_{e\in E} Y_e$.
Each random variable $Y_e$ depends on the at most $2\sqrt{E}$ variables
associated with edges adjacent to $e$, while it is independent of the remaining
ones.\footnote{Note that this is not the case if low degree vertices were
colored with $2(k-s)$-wise independent hash functions.}  
Since $Y_e-\E{Y_e}<1$, we use the aforementioned result by Janson by setting
$p=E$, $\E{E_{\tau_1, \tau_2}^L}=M$, $\psi=E(1/c^2-1/c^4)<M$, $d=1$, $\Delta=
2\sqrt{E}$. Then we get
$
\Pr{E_{\tau_1, \tau_2}^L \geq 2M}\leq e^{-\frac{4 M}{25 \sqrt{E}}}. 
$
By an union bound, the probability that  $E_{\tau_1,\tau_2}^L$ is smaller than $2M$ for every
color pair is at least
$1- c^2 e^{-\frac{4 M}{25 \sqrt{E}}}\geq 1-1/(2E)$ when $M=\BOM{\sqrt{E} \log
E}$.

We now show that the set $E_{\tau_1, \tau_2}^H$ has size $8M$  with probability
at least $1-1/(2E)$.
There are at most $2\sqrt{M}$ high degree vertices colored with a given color.
Then, there cannot be more than $4M$ edges connecting two high degree vertices
in $E_{\tau_1, \tau_2}^H$. Consider now the set $E^{H*}$ of edges connecting
high degree vertices of colors $\tau_1$ or $\tau_2$ to low degree vertices. We
have $E^{H*}\leq 4\sqrt{EM}$. 
For each $e\in E^{H*}$, define the random variable  $Y_e$ to be 1 if the low
degree vertex gets color $\tau_1$ or $\tau_2$, and $0$ otherwise. 
We have  $E_{\tau_1, \tau_2}^H\leq \sum_{e\in E^{H*}} Y_e$.
Since random variables may be dependent, we apply again the result by Janson
with $p=4\sqrt{EM}$, $\E{E^{H*}}=8M$, $\psi=\sum_{e\in E^{H*}}\Var{Y_e}\leq 8M$,
$d=1/2$, $\Delta =  2 \sqrt{E}$ (since only low degree vertices are randomly
colored). Then,
$
\Pr{E^H_{\tau_1,\tau_2} \geq  12 M}\leq e^{-\frac{2 M}{25 \sqrt{E}}}.
$
Then, the probability that  $E_{\tau_1,\tau_2}^H$ is smaller than $16M$ for
every color pair is at least
$1- c^2 e^{-\frac{2 M}{25 \sqrt{E}}}\geq 1-1/(2E)$ when $M=\BOM{\sqrt{E} \log
E}$.

Therefore, we have that each  $E_{\tau_1,\tau_2}$ has size at most $16M$ with
probability at least $1-1/E$. Since each subproblem  receives at most $ k^2$
edge sets, the I/O complexity of a subproblem is $\BO{(18k^2)^{k-s}(8k)^{4(k-s-1)}
M/B}$. Since there are $c^k$ subproblems, the claimed I/O complexity follows.
\qed
\end{proof}

It deserves to be noticed that it is possible to color low degree vertices with
a coloring from a $2(k-s)$-wise independent family and still get the claimed I/O
complexity  with probability $1-1/E^\epsilon$, for $0\leq \epsilon \leq 1/4$, as
soon as $M\geq E^{3/4+\epsilon}$. It suffices to use a technique by Gradwohl and
Yehudayoff~\cite[Corollary 3.2]{GradwohlY08} in our argument instead of the
aforementioned result by Janson~\cite[Theorem 2.3]{Janson04}. 

\subsection{Removing the degree assumption}\label{sec:degree}
Although the assumption in the randomized algorithm that the maximum degree in
$G$ is at most $\sqrt{EM}$ is reasonable  for real datasets, it can be removed
by increasing the I/O complexity by a multiplicative $k^{\BO{k}}$ factor.
We use the previous randomized algorithm as a black box and exploit a coloring
technique that should not be confused with the one used inside the randomized
algorithm. We denote with $V_H$ the set of \emph{very high degree} vertices in
$G$ (i.e., degree larger than $\sqrt{EM}$), and with $V_L = V\setminus V_H$ the
remaining low degree vertices. We let $G_L=(V_L, E_L)$ denote the subgraph of
$G$ induced by $V_L$.

Let $p\in[0,k]$. Consider the following simpler problem: enumerate all
instances of $H$ where $p$ given vertices of $H$, say for notational simplicity
$h_{1}, \ldots h_{p}$, are respectively mapped onto $p$ given very high degree
vertices $v'_1, \ldots, v'_p$, and where the remaining vertices  of $H$ are
mapped onto vertices in $V_L$. Since the mapping on the first $p$ vertices is
given we assume that if $(h_i,h_j)\in V_H$ then $(v'_i, v'_j)\in E$ for any
$1\leq i,j\leq p$ (this can be checked in scanning complexity).
We now show that this problem reduces to the enumeration in $G_L$ of a
suitable \emph{colored} pattern graph with $k'=k-p$ vertices, and which can be
solved with the previous randomized algorithm.
Suppose that each vertex in $V_L$ is colored with a $p$-bit color,
initially set to $0$. Then, for each $i\in [1,p]$ and for each vertex $v\in
V_L$ adjacent to $v'_i$, we
update the color of $v$ by setting the $i$-th bit to 1 (note that at the end of
this operation, a vertex color can have several bits set to 1). Define
the color tuple $d=(d_1, \ldots, d_{k'})$ as follows: set each term to 0; then,
for each $1\leq i \leq p$ and for each $h_{p+j}$ adjacent to $h_i$ in $H$, we
set the $i$-th bit of $d_j$ to 1.
Let $H'$ be the subgraph of $H$ induced by $h_{p+1},\ldots, h_k$.
Then, the problem can be solved by emitting instances $(v'_1, \ldots, v'_p,
v''_1,\ldots, v''_{k'})$, where $(v''_1,\ldots, v''_{k'})$ is every
instance of $H'$ in $G_L$ where vertices are colored according with $d$
(i.e., the $i$-th vertex of the instance has color $d_i$).
The colored instances of $H'$ can be obtained by
adapting the previous randomized algorithm to throw away
instances that are not compatible with coloring $d$.

By iterating the previous technique for any value of $p$ and for any matching of $p$ vertices in $H$ with $p$ very high degree vertices, we get the
claimed result.
\begin{theorem}\label{thm:rem}
The above algorithm  enumerates all instances of a given pattern
graph $H$ and  the expected I/O complexity  is 
$
\BO{k^{5(k-s+1)}{  E^{ k/2}}/({B M^{k/2-1}})}.
$
\end{theorem}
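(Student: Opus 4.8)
\section*{Proof proposal for Theorem~\ref{thm:rem}}

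The plan is to prove correctness and the I/O bound separately, with the I/O analysis resting on a single cancellation identity. For correctness, observe that every instance $(v_1,\ldots,v_k)$ of $H$ in $G$ induces a unique \emph{split}: the set of pattern vertices whose image is a very high degree vertex and the set whose image lies in $V_L$. If that split has $p$ high vertices, then the instance is enumerated exactly in the iteration that fixes $p$, fixes the corresponding size-$p$ subset of $V_H$, and fixes the (injective) matching of those $p$ pattern vertices onto the distinct very high degree vertices $v'_1,\ldots,v'_p$; no other iteration can emit it, so each instance is produced once. Inside such an iteration the reduction is faithful because the $p$-bit vertex colors record, for each $v\in V_L$, exactly which of $v'_1,\ldots,v'_p$ it is adjacent to, while the tuple $d$ records, for each low pattern vertex $h_{p+j}$, which of these $v'_i$ it is required to touch; keeping only the mappings whose vertex colors contain the prescribed bits of $d$ is precisely the edge-induced adjacency constraint between the low part and the high part, and the pattern edges internal to the low part are handled by enumerating $H'=H[\{h_{p+1},\ldots,h_k\}]$ itself. (For vertex-induced instances one keeps only the mappings whose colors \emph{equal} $d$; the count is unchanged.)

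For the I/O bound, I would first note that since $\sum_{v}\deg(v)=2E$ and a very high degree vertex has $\deg(v)>\sqrt{EM}$, there are at most $2E/\sqrt{EM}=2c$ of them, where $c=\sqrt{E/M}$. Hence for a fixed $p$ the number of iterations is at most $\binom{k}{p}(2c)^p$. Each iteration runs the randomized algorithm of Theorem~\ref{thm:randalg} on $G_L$ for the $k'$-vertex pattern $H'$, with $k'=k-p$; this is legitimate because the maximum degree in $G_L$ is at most $\sqrt{EM}$, so Theorem~\ref{thm:randalg} applies with edge bound $E\geq|E_L|$ and MIS size $s'=|\mathrm{MIS}(H')|$, giving cost $\BO{(8k')^{4(k'-s'+1)}E^{k'/2}/(B M^{k'/2-1})}$. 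The whole analysis then turns on the identity
\[
\binom{k}{p}(2c)^{p}\cdot\frac{E^{(k-p)/2}}{B\,M^{(k-p)/2-1}}=\binom{k}{p}\,2^{p}\,\frac{E^{k/2}}{B\,M^{k/2-1}},
\]
which holds because $E^{(k-p)/2}/M^{(k-p)/2-1}=M\,c^{\,k-p}$, so $(2c)^{p}\,M\,c^{\,k-p}=2^{p}M\,c^{k}=2^{p}E^{k/2}/M^{k/2-1}$. Thus, regardless of $p$, the product of the number of iterations and the per-iteration cost contributes the \emph{same} leading factor $E^{k/2}/(B M^{k/2-1})$, up to the polynomial-in-$k$ prefactor $\binom{k}{p}2^{p}(8k')^{4(k'-s'+1)}$.

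To control that prefactor uniformly I would prove the monotonicity lemma $s'\geq s-p$: deleting one vertex from $H$ lowers the maximum MIS size by at most one (if the deleted vertex is in the MIS, drop it; if it is a matched partner, drop its mate or rematch it), so after deleting $p$ vertices $s'\geq s-p$. Consequently $k'-s'+1=(k-p)-s'+1\leq k-s+1$, and since $8k'\leq 8k$ we get $(8k')^{4(k'-s'+1)}\leq(8k)^{4(k-s+1)}$. Summing the leading contribution over $p$ then gives $(8k)^{4(k-s+1)}\sum_{p=0}^{k}\binom{k}{p}2^{p}=(8k)^{4(k-s+1)}3^{k}$, which is at most $k^{5(k-s+1)}$ after absorbing constants, as claimed; linearity of expectation turns the summed expected per-iteration bounds into the stated expected total.

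The step I expect to be the real obstacle is not this cancellation but the boundary regime $k'\leq 1$ (i.e. $p\in\{k-1,k\}$), where the formula of Theorem~\ref{thm:randalg} falls below a single scan of $E$ and the naive ``one scan per matching'' accounting would inflate the bound by a factor $c$. Here I would avoid iterating over the $(2c)^{p}$ matchings with separate scans: since the subgraph $G_H$ on the very high degree vertices has only $\BO{c^{2}}=\BO{E/M}$ edges, its instances of $H[\{h_1,\ldots,h_p\}]$ are enumerated entirely in internal memory, and the at most one remaining low vertex is matched against the precomputed bipartite adjacency between $V_L$ and the very high degree vertices (of total size $\BO{E}$) in a single pass. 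This caps the boundary cost at $\BO{k\,E/B}$, which is dominated by $E^{k/2}/(B M^{k/2-1})$ for $k\geq 2$. Making this in-memory treatment of $G_H$ rigorous (and checking it still holds when $G_H$ does not fit, by instead invoking the deterministic algorithm on the $\BO{E/M}$-edge graph, whose cost is a lower-order $\BO{E^{k/2}/(B M^{k-1})}$) is where the careful bookkeeping lies.
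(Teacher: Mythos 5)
Your proposal is correct and takes essentially the same route as the paper's proof: bound the number of very high degree vertices by $2\sqrt{E/M}$, count the at most $k^{\BO{k}}c^p$ calls for each $p$, invoke Theorem~\ref{thm:randalg} on $G_L$ (where the degree bound $\sqrt{EM}$ holds), and exploit the cancellation $c^p\, E^{(k-p)/2}/M^{(k-p)/2-1} = E^{k/2}/M^{k/2-1}$ before summing over $p$. Your two additions --- the monotonicity lemma $s' \geq s-p$ that justifies a uniform prefactor (the paper simply writes $k^{4(k-p-s+1)}$ without addressing how the MIS of $H'$ relates to $s$), and the explicit handling of the boundary $k' \leq 1$, where per-call scans would otherwise inflate the bound --- are welcome patches to steps the paper glosses over, not a different approach.
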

\begin{proof}
 We first show that the technique correctly enumerates all instances 
where $h_{1}, \ldots, h_{p}$ are
respectively mapped onto the very high degree vertices $v'_1, \ldots, v'_p$, and
where the remaining vertices  of $H$ are mapped onto vertices in $V_L$.
Consider the emitted  $(v'_1, \ldots, v'_p, v''_1,\ldots, v''_{k'})$ tuple.
We now prove that this instance satisfies the required properties. Clearly,
$v''_i\in V_L$ by construction. We now show that if $(h_i, h_j)\in H$ then the
edge is correctly mapped onto $E$. If $1\leq i,j\leq p$ or $p+1\leq i,j\leq k$,
the claim is verified, respectively,  by the initial assumption on $v'_1,
\ldots, v'_p$ and and by the correctness of the randomized algorithm. 
Suppose $1\leq i\leq p$ and $p+1\leq j\leq k$ (the opposite is equivalent).
Color $d_{j-p}$ must have the $i$-th bit set to $1$ since $(h_i, h_j)\in E_H$.
Since the instance must verify the coloring tuple, vertex $v''_{j-p}$ has
color $d_{j-p}$ and then it is adjacent to $v'_i$ since the $i$-th bit is
1.
Vice versa, it can be similarly shown that all instances that satisfy the
desired properties are correctly enumerated.

Let $r\leq 2\sqrt{E/M}$ be the number of very high degree vertices. Since for a
given $p$ the technique is called $r^p \frac{k!}{(k-p)!}$, the expected
I/O complexity can be upper bounded as follows:
$$
\BO{\sum_{p=0}^k r^p \frac{k!}{(k-p)!} k^{4(k-p-s+1)}
\frac{E^{(k-p)/2}}{BM^{(k-p)/2-1}}}=
\BO{k^{5(k-s+1)}\frac{E^{k/2}}{BM^{k/2-1}} }.
$$
\qed
\end{proof}

We note that the subsequent lower bound does not hold for
the technique proposed for getting rid of the degree assumption. Indeed,
information on graph connectivity are encoded in the coloring bits, but 
the lower bound requires at least one memory word for each vertex or edge.
However,  if $k$ is a small constant, the lower bound
still applies by using a memory word instead on a single bit.

\section{Further Extensions}\label{sec:ext}

\subsubsection{Lower Bound on I/O Complexity}

We now describe a lower bound on the I/O complexity for any algorithm
that enumerates $T$ instances of a pattern graph in the class of graphs named \emph{Alon class}~\cite{AfratiSSU13}.
A graph in the Alon class has the property that vertices can be partitioned into
disjoint sets such that the subgraph induced by each partition is either a
single edge, or contains an odd-length Hamiltonian cycle.
As in previous works~\cite{HuTC13,PaghS13} on triangle  enumeration, we assume
that each edge or vertex requires at least one memory word. That
is, at any point in time there can be at most $M$ edges/vertices in memory, and
an I/O can move at
most $B$ edges/vertices to or from memory. This assumption is similar to the
indivisibility assumption which is common in lower bounds on the I/O
complexity. 
My mimic the argument in~\cite{PaghS13} for triangle
enumeration, it can be proved that the enumeration requires $\BOM{{T}/({B M^{ k/2-1}})+{T^{2/k}}/{B}}$ I/Os. The claim follows by the fact that there cannot be more than $\BT{m^{k/2}}$ instances of a subgraph in the Alon class in a graph of $m$ edges~\cite{Alon81}. (For the sake of completeness we provide the proof in Appendix~\ref{app:lb}).
When $k=\BO{1}$,  the lower bound shows that our randomized algorithm is
optimal for any pattern graph, while the deterministic algorithm is optimal if
$s=k/2$ and $M=E^\epsilon$ for some constant $\epsilon>0$. Indeed, if the data
graph is a complete graph with $\sqrt{E}$ vertices, there exist $T=\BT{E^{k/2}}$
instances of any pattern graph with $k$ vertices.

\subsubsection{Work Complexity}
We analyze the work complexity when the pattern graph is in the Alon class and
$k=\BO{1}$. By using the ideas in~\cite[Theorem~6.2]{AfratiDU13}, the
enumeration (in internal memory) within each iteration
of the deterministic algorithm can be
performed in
$\BTO{M^{k/2-1}}$ work.
Then the total work of the deterministic algorithm is $\BTO{E^{k-s}/M^{k/2-s}}$.
As a consequence the expected work of the randomized algorithm becomes
$\BTO{E^{k/2}}$, which is just a polylog factor from the optimum since instances
in the Alon class (e.g., cliques) can appear $\BT{E^{k/2}}$ times in the worst
case.
To the best of our knowledge, the only algorithm for enumerating a generic
pattern graph which does not belong to the Alon class is a brute-force
approach. 
In this case,  the deterministic algorithm requires $\BTO{E^{k-s}}$ work since
Step~\ref{s4b} can be performed in $\BTO{M^{k-s-1}}$ work using the brute-force
approach; the expected work of the randomized algorithm then becomes
$\BTO{E^{k/2}M^{k/2-s}}$. In this case the work may become the main bottleneck
in a practical implementation.

\section{Conclusion}

The worst case complexities of our algorithms have an exponential dependency on the vertex
number $k$ of the pattern graph, and they are thus mainly of theoretical interest.
The lower bound shows that this is the best result  in the worst case  under
standard assumptions. However, some experiments~\cite{ParkSKP14} on related
MapReduce algorithms for triangle enumeration shows interesting performance and
seems to suggest that the analysis of our algorithms can be improved by
expressing the complexities as function of some properties of the input graph
(e.g., arboricity) or of the output. An output sensitive algorithm for triangle
enumeration has recently been proposed by Bj{\"o}rklund et
al.~\cite{BjorklundPWZ14} in the RAM model, however the problem remains open in
the external memory for the enumeration of an arbitrary subgraph as well as for
triangle enumeration.

\textbf{Acknowledgments.} The author would like to thank Rasmus Pagh and
Andrea Pietracaprina for useful discussions.

{\small
\bibliographystyle{splncs}
\bibliography{subgraphs}
}

\newpage
\section*{Appendix}

\subsection{Deterministic EM Algorithm when  $s=k/2$}\label{app:detalgspecialcase}
We now explain how  to extend the algorithm to the case $s=k/2$, that is
when all vertices in $V_H\setminus S$ are matched  with a vertex
in $S$. Note that in this case $k$ must be even. We recall that, with our
notation, $h_i$ is matched with $h_{k/2+i}$ under $S$ for each $1\leq i \leq
k/2$. Let $\Gamma_k$ denote the set of indexes of vertices in $V_H\setminus
h_{k/2}$  adjacent to $h_k$ (i.e., $i\in \Gamma_k$ if and only if $i<k/2$ and
$(h_i,h_{k})\in E_H$).

We observe that in the previous algorithm, the vertex set $V_{k}$ is empty in
Step~\ref{s4b} since $h_k$ is the probe vertex of $h_{k/2}$ and thus $V_k$
is not filled  in Step~\ref{s2}. If there are no
incomplete vertices in each chunk, then the previous algorithm can be fixed  by
filling $V_k$ in Step~\ref{s2} with vertices that are connected to a vertex in
$V_j$ for every $j\in \Gamma_k$. Indeed, these are the only possible values on
which $v_{k}$ can be mapped when all vertices $h_i$ with $1\leq i < k/2$ are
mapped onto vertices in $V_i$. This operation requires no I/Os since all
adjacency lists in each chunk are completely contained in internal memory, and
hence the upper bound in Theorem~\ref{th:det} still applies. Instead of proving
this claim, we propose a more general approach that holds even with incomplete
vertices.

Two major changes are required in the deterministic algorithm. The first one
allows to correctly enumerate all instances where at least one vertex in $\{h_i,
\forall i\in \Gamma_k\}$ is mapped onto a complete vertex. Then the second
change, which is more articulated, allows to enumerate all instances where each
vertex in $\{h_i, \forall i\in \Gamma_k\}$ is mapped onto an incomplete vertex. 

\emph{First change.} In Step~\ref{s2}, we add to $V_k$ all vertices $v$ which
are neighbors of complete vertices in $V_j$ for some $j\in \Gamma_k$.
Specifically, for each edge $(u,v)$ in $E_j$ with $j\in \Gamma_k$, $u\in
V_j$ and $u$ complete, vertex $v$ is added to $V_k$. As we will see in the main
proof, this
change allows to enumerate all instance where at least one vertex in $\{h_i,
\forall i\in \Gamma_k\}$ is mapped onto a complete vertex. For clearness,
consider the following example. Let $h_1$ be adjacent to $h_k$ and let be $v$ a
complete vertex in $V_1$. If $h_1$ is mapped onto $v$, the possible values onto
which $h_k$ can be mapped is given by the adjacency list of $v$ which is totally
in memory (thus, $V_k$ is set to these vertices). Then, for complete the
enumeration in the current round we have to insert into $E'$ each edge
connecting incomplete vertices in each $V_j$, for $j\in \Gamma_k$ with a vertex
in $V_k$ (this operation is performed by Step~\ref{s3} without further
modifications)

\emph{Second change.} 
We add a new operation before Step~\ref{s5}, but outside the iteration loop in
Step~\ref{s4}. This operation is performed only if there exists an incomplete
vertex in each chunk $C_{\ell_i}$ with $i\in \Gamma_k$ and let $v'_1, \ldots,
v'_{\Gamma_k}$ these vertices  (otherwise, there would no instances where each
vertex in $\{h_i, \forall i\in \Gamma_k\}$ is mapped onto an incomplete vertex
and this modification would be useless). The algorithm computes a set $V'$,
stored in external memory since it may exceed the internal memory size,
containing all vertices that are connected with all vertices $v'_1, \ldots
v'_{\Gamma_k}$ in $G$; this set can be computed by merging the adjacency lists
of $v'_1, \ldots v'_{\Gamma_k}$ and keeping only vertices that appear $\Gamma_k$
times. Then, using sorting, we compute a new edge list $\hat E$ containing all
edges with at least one extreme in $V'$. For each edge in $\hat E$, we call
\emph{linked} the vertex in $V'$. We denote with $\hat V$ the vertices in $\hat
E$ and require $\hat E$ to be stored as a collection of adjacency lists.
Subsequently, the algorithm enumerates all instances where vertex $h_{k/2}$ is
mapped onto a vertex in $\hat V$,  its probe edge onto an edge in $\hat E$,
$h_{k}$ onto the linked vertex of this edge (i.e., with a vertex in $V'$), and
$h_i$ on the incomplete vertex in $V_i$ for each  $i\in \Gamma_k$. The
enumeration is performed in iterations as in the previous algorithm, and in each
iteration the algorithm maps $h_{k/2}$ on a vertex $\hat V$ and its adjacency
list loaded in memory (if the adjacency list is too long we split it into
segments of size $M/(8k)$). (Clearly, as for every instance enumerated in the
current round, we also require that vertex $h_i$ is mapped onto a vertex in
$V_i$ and its probe edge onto $E_i$ for any $1\leq i \leq k/2-1$.) We observe
that it is not needed to load in memory the adjacency lists of incomplete
vertices in $\{V_i, \forall i\in \Gamma_k\}$ since each vertex in $V'$ is
connected with all of them by construction. The I/O complexity of the algorithm
is bounded by the following theorem. 
\begin{theorem}
The above algorithm correctly enumerates all instances of a
given pattern graph $H$ and its I/O complexity is 
$\BO{(8k)^{k-s-1}\frac{E^{k-s}}{B M^{k-s-1}} \log_M E}.$
\end{theorem}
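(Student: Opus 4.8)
First I would mirror the proof of Theorem~\ref{th:det}, establishing correctness---every instance is emitted exactly once---and the I/O bound separately; the only genuinely new difficulty is recovering the image of $h_k$. When $s=k/2$ the image $v_k$ of $h_k$ is no longer fixed by the mapping of the $k/2-1$ probe edges of $h_1,\ldots,h_{k/2-1}$ together with the vertex $h_{k/2}$, because the probe edge of $h_{k/2}$, namely $(h_{k/2},h_k)$, is left unmapped. I would therefore split the instances of $H$ according to whether at least one vertex of $\{h_i:i\in\Gamma_k\}$ is mapped onto a \emph{complete} vertex (Case~A) or all of them are mapped onto \emph{incomplete} vertices (Case~B), and prove that the first change emits exactly the Case~A instances and the second change exactly the Case~B instances; these two families are disjoint and exhaustive, the disjointness at the emission level requiring a small adjustment to the first change discussed below.

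For Case~A the argument is essentially that of Theorem~\ref{th:det}: if $v_i$ with $i\in\Gamma_k$ is complete then its whole adjacency list sits in the loaded chunk $C_{\ell_i}$, so the neighbour $v_k$ of $v_i$ is inserted into $V_k$ in the modified Step~\ref{s2} at no I/O cost, and the unique round and iteration that fix the probe edges of $h_1,\ldots,h_{k/2-1}$ and set $V_{k/2}=\{v_{k/2}\}$ then reproduce the instance. For Case~B I would use that each chunk holds at most one incomplete vertex, so the images $v_i$, $i\in\Gamma_k$, are forced to equal the unique incomplete vertices $v'_i$ of their chunks; hence $v_k$ must be a common neighbour of all the $v'_i$, i.e.\ $v_k\in V'$, the probe edge $(v_{k/2},v_k)$ lies in $\hat E$ with $v_k$ as its linked endpoint, and the enumeration over $\hat V$ reaches it. Because $S$ is independent, the only neighbours of $h_k$ are $h_{k/2}$ and $\{h_i:i\in\Gamma_k\}$, so every edge of $H$ incident to $v_k$ is certified in internal memory (those to the $v'_i$ by $v_k\in V'$ and the one to $v_{k/2}$ by $\hat E$), while the remaining edges among $v_1,\ldots,v_{k-1}$ are certified by $E'$ and by the loaded segment of $v_{k/2}$'s adjacency list, just as $E'$ and $E''$ do in Steps~\ref{s3} and~\ref{s4a}.

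For the I/O bound I would observe that the first change reads only chunks already resident in memory and is therefore free, so rounds handled by it keep the $\BO{E/B}$ scan cost of Theorem~\ref{th:det}. The second change runs once per round: it builds $V'$ by merging the $\BO{k}$ adjacency lists of the $v'_i$ and builds $\hat E$ (the edges with an endpoint in $V'$) by sorting, each costing $\BO{\sort{E}}=\BO{(E/B)\log_M E}$ I/Os, and then scans $\hat E$ once, splitting any over-long adjacency list into segments of $M/(8k)$ words so that the per-segment enumeration fits in internal memory and performs no I/O. The per-round cost is thus $\BO{(E/B)\log_M E}$, and summing over the $\phi^{k-s-1}\le(8kE/M)^{k-s-1}$ rounds gives the claimed $\BO{(8k)^{k-s-1}\,E^{k-s}/(B M^{k-s-1})\,\log_M E}$.

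The step I expect to be the main obstacle is the single-emission guarantee across the two changes, together with the memory accounting of the second one. The delicate point is that a Case~B instance could in principle also be produced by the first change, since its $v_k$ might happen to be a neighbour of some \emph{unrelated} complete vertex that was placed in $V_k$; to exclude this I would restrict the naive enumeration of the first change to emit an instance only when at least one $v_i$ with $i\in\Gamma_k$ is complete---a test carried out for free in internal memory---which makes Cases~A and~B genuinely disjoint. I would then have to verify that, although $V'$ and $\hat E$ may exceed $M$ and hence live in external memory, a single $M/(8k)$-segment of $v_{k/2}$'s list together with the resident chunks, $E'$ and $V_k$ never exceeds the $M$-word budget, so that the Step~\ref{s4b}-style enumeration stays I/O-free; this combination of per-round sorting and segmentation is exactly where the extra $\BO{\log_M E}$ factor is incurred and appears unavoidable for this approach.
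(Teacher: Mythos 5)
Your proposal follows the paper's own proof essentially verbatim: the same split into the case where at least one vertex of $\{h_i : i\in\Gamma_k\}$ is mapped onto a complete vertex (handled by the first change, which fills $V_k$ at no I/O cost) and the all-incomplete case (handled by the second change, where the unique incomplete vertex per chunk forces $v_i=v'_i$, so $v_k\in V'$ and the probe edge of $h_{k/2}$ lies in $\hat E$ with $v_k$ linked), together with the identical per-round accounting of one sort, $\BO{(E/B)\log_M E}$ I/Os, summed over $\phi^{k-s-1}\leq (8kE/M)^{k-s-1}$ rounds. Your explicit emission guard making the two cases disjoint (emit in the first change only when some $v_i$, $i\in\Gamma_k$, is complete) is a sound, free in-memory refinement of a single-emission point the paper's proof leaves implicit, and does not alter the approach.
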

\begin{proof}
Consider the first change. In Step~\ref{s3}, we load in memory each edge
between incomplete vertices in $\cup_{i\in \Gamma_k} V_i$ and vertices in
$V_k$. By mimic the proof of Theorem~\ref{th:det}, it can be shown that the
algorithm correctly enumerates all instances where vertex $h_i$ is mapped onto a
vertex in $V_i$, its probe edge onto $E_i$ for any $1\leq i \leq k/2-1$
\emph{and} at least one vertex in $\{h_i, \forall i\in \Gamma_k\}$ is mapped
onto a complete vertex. However, it may happen that instances where all 
vertices in $\{h_i, \forall i\in \Gamma_k\}$ are mapped onto incomplete vertices
are not enumerated since some edges could be missing in $E'$.

This is fixed by the second change. Indeed, the construction on $\hat E$
guarantees that for each $(u,v)\in \hat E$, where $v$ is marked as linked, the
vertex $v$ is connected to every incomplete vertex in $\{V_i, \forall i\in
\Gamma_k\}$. Therefore. as soon as $h_i$ is mapped on the incomplete vertex in
$V_i$, with $i\in \Gamma_k$, and $h_k$ is mapped onto $v$, we have that the
edge dependencies are correctly enumerated (even if edge information are not
currently available in internal memory).

Finally, we note that the first change does not increase the I/O complexity
and load in memory at most $\Gamma_k \cdot M/(2k)\leq M/4$ additional
edges/vertices. The second change requires $\BO{(E/B) \log_M E}$ I/Os per round
(i.e., sorting complexity) and load in memory at most $M/(8k)$ edges per
iteration. Since the space used by the first change can be deallocated before
the operations required by the second change start, the total amount of internal
memory never exceeds $M$
(recall as shown in the previous theorem, the base algorithm requires about
$M(k-1)/(2k)$ words of internal memory). The claimed I/O complexity easily
follows.
\end{proof}

We observe that the deterministic algorithm requires a MIS $S$ of the pattern graph (i.e., each vertex in $S$ is matched with a vertex
not in $S$) in order to correctly enumerate instances with incomplete
vertices. As an example consider the following case. Let the pattern graph be a
path of length 3, let $h_1$ be adjacent to vertices $h_{0}$ and $h_{2}$, and let
$S=\{h_0, h_2\}$ be a standard independent set of $H$ (note that it is not
matched). Suppose that there exists
an instance where vertices $v, v', v''$ are mapped onto $h_0, h_{1}, h_{2}$
respectively. If $v'$ is incomplete and edges $(v,v')$ and $(v',v'')$ are in
distinct chunks, then the two edges may not be at the same time in the internal
memory and then the instance cannot be emitted. This problem disappears if the
maximum degree of the input data graph is $\BO{M/k}$ since there are no
incomplete vertices and then all edges connected to a vertex are available
within a single chunk. In this case, it can be proved that the I/O complexity
reduces to $\BO{(8k)^{k-s'-1}E^{k-s'}/(BM^{k-s'-1})}$ I/Os, where $s'$ is the
size of a traditional independent set $S'$ of the pattern graph. This implies
that it is possible to go below the $\BO{E^{k/2}/(BM^{k/2-1})}$ bound if $s'\geq
k/2$, such as in stars, paths of odd length, or meshed with odd side. Clearly,
for these patterns the lower bound in Section~\ref{sec:ext} does not apply since
they are not in the Alon class.

\subsection{Lower bound on the I/O Complexity}\label{app:lb}
 The proof mimics the argument in~\cite{PaghS13} for triangle enumeration, but
exploits the fact that there cannot be more than $\BT{m^{k/2}}$ instances of a
subgraph in the Alon class in a graph of $m$ edges~\cite{Alon81}. 
The execution of an algorithm on a memory of size $M$ can be simulated, without
increasing the I/O complexity,  in a memory of size $2M$ so that the computation
proceeds in rounds. 
In each round (with the possible exception of the last round) there are
$\BT{M/B}$ I/Os, and memory blocks are read from (resp., written on) the
external memory only at the begin (resp., end) of a round. (We refer
to~\cite{PaghS13} for more details on the simulation.) 
By the aforementioned result on the Alon class, $\BT{M^{k/2}}$ instances can be
enumerated in a round since there are at most $2M$ edges in memory. Then, there
must be at least $\lfloor T/\BT{M^{k/2}}\rfloor$ rounds. Since each round needs
$\BT{M/B}$ I/Os, we get the first part of the claim.
The second term  follows since $\BOM{T^{2/k}}$ input edges must be read to
enumerate $T$ distinct instances.  
\qed

\subsection{Enumeration of Induced Subgraphs.}\label{app:induced}
The deterministic and randomized algorithms  can be easily adapted to enumerate
all \emph{induced} instances of a given subgraph. The I/O complexity of the
deterministic algorithm does increase asymptotically, while the I/O complexity
of the randomized algorithm shows only a small increase in the exponent of the
term $k^{\BO{k}}$.
It suffices to run the deterministic algorithm as the subgraph was a $k$-clique
$s=1$ and hence we can use the simple deterministic algorithm bounded in
Theorem~\ref{th:det}). In each
iteration, the algorithm contains all edges in $E$ between
any pair of vertices in  $\cup_{i=1}^k V_i$. Then, all instances of $H$ are
found, but only  induced instances are enumerated. This is possible since all
edges between vertices in the instance are available in memory. The  I/O
complexity of the algorithm  then  becomes $\BO{(8k)^{k-2} E^{ k-1}/(BM^{
k-2})}$. 
By using this algorithm for solving subproblems in the randomized algorithm, we
get an enumeration algorithm for induced subgraphs requiring  $\BO{(8k)^{4k}
E^{k/2}/(B M^{k/2-1})}$  I/Os, assuming that the maximum vertex degree is
$\sqrt{EM}$. The high probability result applies as well.
\end{document}